\documentclass{article}

\usepackage[utf8]{inputenc}		
\usepackage{authblk}

\usepackage{amsmath}
\usepackage{amsthm}
\usepackage{amsxtra}
\usepackage{mathpazo}
\usepackage{MnSymbol}
\usepackage{marvosym}

\usepackage{enumerate}
\usepackage{braket}
\usepackage{url}
\usepackage{hyperref}

\DeclareMathAlphabet{\mathsc}{OT1}{cmr}{m}{sc}
\allowdisplaybreaks


\newcommand{\lo}[1]{\raisebox{-0.1ex}{$#1$}\,}
\newcommand{\loo}[1]{\raisebox{-0.2ex}{$#1$}\,}
\newcommand{\Lo}[1]{\raisebox{-0.3ex}{$#1$}\,}

\newcommand{\LO}[1]{\raisebox{-0.5ex}{$#1$}\,}

\newcommand{\R}{\mathbb R}

\newcommand{\Z}{\mathbb Z}

\newcommand{\T}{\mathbb T}

\newcommand{\eps}{\varepsilon}
\newcommand{\abs}[1]{\lvert #1 \rvert}

\newcommand{\norm}[1]{\lVert #1 \rVert}

\newcommand{\EE}[3]{\mathbb E_{#1}\!\left[#2\,\middle|\,#3\right]}
\newcommand{\D}[1]{\mathrm{d}#1}
\newcommand{\I}{\mathrm{i}}
\newcommand{\e}{\mathrm{e}}
\newcommand{\Langle}{\left\langle}
\newcommand{\Rangle}{\right\rangle}
\newcommand{\Cb}[1]{C_\mathrm{b}\!\left(#1\right)}

\newcommand{\mc}[1]{\mathcal{#1}}

\theoremstyle{definition}
\newtheorem{defn}{Definition}

\theoremstyle{plain}
\newtheorem{thm}[defn]{Theorem}

\theoremstyle{remark}
\newtheorem{ex}[defn]{Example}

\title{Towards an Analytic Theory of Stochastic and Quantum Fields}
\author{
Rodrigo Vargas Le-Bert\footnote{\Letter {\tt vargonis@gmail.com}. } 
}
\date{\today}


\begin{document}

\begin{titlepage}
\maketitle
\thispagestyle{empty}

\begin{abstract}
We 
propose a method for
the rigorous construction of physically relevant functional measures. In shaping it we get several conceptual insights, which can perhaps be summarized by the following statement: the renormalized interaction Lagrangian should be the generator of a flow on a space of asymptotically free cylinder functional measures with density given, in the case of Boson fields with polynomial self-interaction, by a generalized form of the Appell polynomials. 
\medskip \\
{\bf 2010 MSC}: 81T08, 60G60 (primary); 81T16, 60G42 (secondary).
\end{abstract}

\setcounter{tocdepth}{2}
\tableofcontents
\end{titlepage}

\section{Introduction}

In this work we propose a general method for the construction of quantum fields. 
We see the problem 
as naturally divided in two broad stages:
\begin{enumerate}
	\item
Let $X$ be a space of fields $x=x(t;s_1\lo,\dots,s_{d-1})$. Quotienting out the high-energy modes gives a finite-dimensional projection $P:X\rightarrow X_P$\lo. Consider the family $\mc P$ of all such projections; a \emph{cylinder measure} is a compatible family $\mu = \Set{\mu_P | P\in\mc P}$ of measures on their ranges---or, equivalently, a positive linear functional on the algebra $\Cb{X;\mc P}$ of continuous, bounded, \emph{cylinder functions}, i.e.\ those $f:X\rightarrow\R$ which factor through one of the projections $P\in\mc P$. The first stage is the construction of the cylinder measure corresponding to the Euclidean path integral of the theory.
	\item
Cylinder measures are, generally speaking, only finitely additive. The second stage is to Radonify the cylinder measure on a space $\overline X\supseteq X$ whose elements are such that $t\mapsto x(t;\cdot)$ is continuous, thus obtaining a stochastic process which, modulo OS-positivity, is suitable for the reconstruction of the Lorentzian field.
\end{enumerate}
The second stage can be approached using Radonifying operators and an\-i\-so\-tro\-pic Sobolev spaces. The first one presents more conceptual difficulties, for it is there that renormalization takes place. Our approach stems from a rigorous consideration of the deformation of the free measure which should occur upon gradually turning the interaction on. 
Thus, we take a hypothetic family of measures $\set{\mu_\lambda}$, where $\lambda\geq 0$ stands for a coupling constant, and let the formal expression
\[
\mu_\lambda = \e^{-\lambda L}\mu_0\loo,
\]
where $L$ is a renormalized interaction Lagrangian, guide our developments. Now, we take this equation to mean that for small $\eps$, in some suitable sense 
\[
\mu_{\lambda+\eps} \approx \mu_\lambda + \eps L\mu_\lambda\loo.
\]
However, $\mu_\lambda$ is not just one measure, but a whole family $\set{\mu_{\lambda,P}}$ of them, and therefore $L$ itself must stand for a family $\set{L_P}$ of functions $L_P=L_P(x_P)$, where $x_P=P(x)$, such that
\begin{enumerate} 
	\item $\Set{L_P(x_P)\mu_{\lambda,P}(\D x_P)}$ is compatible.
	\item $L_P$ is \emph{local,} or at least approximately so, in the sense that
\[
L_P(x_P) \approx \int \mc L_P\bigl(x_P(s)\bigr)\D s,\quad s=(t;s_1\lo,\dots,s_{d-1}),
\]
for some effective Lagrangian density $\mc L_P:\R\rightarrow\R$---assuming, for simplicity, that we are dealing with a real scalar Boson field with a self-interaction that does not depend on its derivatives.
\end{enumerate}
The first condition actually forces $L_P$ to depend on $\mu_\lambda$\lo, and it is therefore better to write
\[
L(\mu) = \Set{ L_P(\mu;x_P)\mu_P(\D x_P)}.
\]
Having gained this structural understanding of the problem of constructive field theory, our task is clear: to construct the non-linear operator $L(\mu)$, which must somehow be related to the classical interaction Lagrangian density, and then show that it generates a flow on a space of cylinder measures. 

Before proceeding, a word of caution regarding terminology is in order. Our considerations have led us to think of the function $L_P=L_P(\mu;x_P)$ above as the effective Lagrangian, whereas traditionally the effective Lagrangian is the function $\tilde L_P = \tilde L_P(\lambda;x_P)$ such that
\[
\mu_{\lambda,P} = \e^{-\tilde L_P(\lambda)}\mu_{0,P}\lo.
\]
We will call this function the \emph{aggregate} effective Lagrangian. Its relationship with our effective Lagrangian $L_P$ can be theoretically worked out from the fact that $\mu_{\lambda,P}$ solves the differential equation
\begin{equation}\label{measure evolution}
\frac{\D}{\D\lambda}\mu_P = -L_P(\mu)\mu_P\lo,
\end{equation}
the problem being that $L_P$ is a function of the whole family $\mu=\set{\mu_P|P\in\mc P}$, as opposed to just $\mu_P$\lo.

\section{Effective theories and cylinder measures}

Cylinder measures are traditionally defined as compatible collections of measures on the set of all finite dimensional quotients of a locally convex vector space---see~\cite{badrikian1969mesures}, for instance. This is insatisfactory for two reasons: first, it innecessarily leaves out non-linear spaces, hiding the fact that the notion is just at the coordinate system level; second, but more relevant to this work, in asking for a measure on \emph{every} finite dimensional quotient it becomes impractical, leaving us with little more than one example: Gaussian measures. Thus, we adopt a different definition, one that makes it evident its close relationship with renormalization theory.

\subsection{Cylinder measures}

Let $\mc P$ be a directed set, $\set{X_P|P\in\mc P}$ a projective system of topological spaces with projective limit $\overline X$ and canonical projections $\pi_P:\overline X\rightarrow X_P$\loo, and $X$ a subspace of $\overline X$ which is \emph{full,} in the sense that $\pi_P(X)=X_P$\loo. By a harmless abuse of notation, we will usually write $P$ instead of $\pi_P$\loo.
It will also be convenient to write $P$ for the projection $X\rightarrow X_P$\loo, and even for the projection $X_Q\rightarrow X_P$ when $Q\succcurlyeq P$ is understood from the context. 
We think of $\set{P:X\rightarrow X_P|P\in\mc P}$ as a coordinate system on $X$. 
Now, given $P\in\mc P$, consider the algebra
\[
\Cb{X;P} = \Set{\vphantom{\hat A} f\in \Cb{X} | f \text{ factors through } P:X\rightarrow X_P }.
\]
If $P\preccurlyeq Q$, there is a natural inclusion $
\Cb{X;P}\hookrightarrow \Cb{X;Q}$. The resulting directed system has an algebraic injective limit 
\[
\Cb{X;\mc P} = \injlim \Set{\vphantom{\hat A} \Cb{X;P} | P\in\mc P }.
\]
The elements of $\Cb{X;\mc P}$ are called \emph{cylinder functions} (for the coordinate system in use). 
Some important examples of coordinate systems follow.

\begin{ex}
Let $X$ be a Tychonoff space and $\Set{e_i|i\in\mc I}\subseteq \Cb{X}$ a separating family of continuous functions. Given a finite subset $I\subseteq\mc I$, consider the equivalence relation
\[
x\sim y \Leftrightarrow (\forall i\in I)\ e_i(x)=e_i(y)
\]
and let $P_I:X\rightarrow X_I$ be the corresponding quotient. One has that $X$ is a full subspace of $\projlim X_I$\loo. 
Now, for $I=\{i_1\lo,\dots,i_n\}$, we define 
\[
e_I:X\rightarrow\R^n,\quad e_I(x) = \left(e_{i_1}(x),\dots e_{i_n}(x)\right)
\]
and then $f\in\Cb{X}$ is a cylinder function if, and only if, it factors through one of the $e_I$'s.
\end{ex}

\begin{ex}
Consider a path space $X_I=C(I,X)$, with $I\subseteq\R$. Given a finite number of time instants $t_1\lo,\dots,t_n\in I$, we have the projection
\[
x\in X_I\mapsto \left( x_{t_1}\Lo,\dots, x_{t_n} \right) \in X^n.
\]
The resulting coordinate system on $X_I$ has cylinder functions 
\[
x\mapsto f\left( x_{t_1}\Lo,\dots x_{t_n} \right),\quad f:X^n\rightarrow\R.
\]
The classical Kolmogorov consistency theorem is about the construction of path space measures in this coordinate system.
\end{ex}

\begin{ex}
Let $X$ be a Banach space and $\mc P\subseteq B(X)$ a directed family of projections converging strongly to $1\in B(X)$ (so that $X$ has the metric approximation property). The importance of this convergence hypothesis will be shortly seen. This is the example that corresponds more closely to the situation studied in standard cylinder measure theory. A particular case is that of a separable Hilbert space with orthonormal basis $\set{e_i}$ and projections $P_n = \sum_{i\leq n}e_ie_i^*$\loo, where $e_i^*(x) = \langle e_i\lo,x\rangle$. Cylinder functions, then, are those which depend only on a finite number of coordinates.
\end{ex}

\begin{ex} \label{deltas as coordinates}
Let $(S,\D s)$ be a measure space. In this example we show how to rigorously treat the so-called physical coordinates in field theory: that is, the notion that
\[
\Set{\delta_s=\delta(\cdot-s)|s\in S},
\]
where $\delta$ is Dirac's delta, is a ``basis'' for the space of fields $x:S\rightarrow\R$. We take a system $\set{p_i | i=1\dots n}$ of projections of the von Neumann algebra $L^\infty(S)$ which is orthogonal and complete, in the sense that $p_ip_j=0$ and $\sum p_i=1$. 
To $\{p_i\}$ we associate the conditional expectation
\[
P:L^\infty(S)\rightarrow X_P\loo,\quad
P = \sum_i p_ip_i^*\loo,\ p^*(x) = \int_S \bar px,
\]
where $\bar p = p/\abs p$ and $\abs p = \int_X p$. 
Now, let $\set{q_{ij} | i=1\dots n,\,j=1\dots m}$ be a refinement of $\{p_i\}$, i.e.\ another complete system of orthogonal projections such that $p_i=\sum_{j} q_{ij}$, with associated conditional expectation $Q:L^\infty(S)\rightarrow X_Q$\loo. Since $\{q_{ij}\}$ is a refinement of $\{p_i\}$, we have a projection (conditional expectation) $X_Q\rightarrow X_P$\loo. Given a directed family $\mc P$ of such systems of orthogonal projections we get a projective system $\{X_P\}$ and, if the family generates $L^\infty(S)$, then any good $X\subseteq L^\infty(S)$ will become a full subspace of $\projlim X_P$\loo. 
\end{ex}

\begin{defn}
Let $X$ be a Tychonoff space equipped with a coordinate system $\Set{P:X\rightarrow X_P|P\in\mc P}$.
A \emph{cylinder measure} on $X$ is a family of Radon measures $\set{\mu_P\text{ on } X_P}$ which is \emph{compatible,} in the sense that
\[
P_*\mu_Q = \mu_P\lo,\quad\text{ for all } Q\succcurlyeq P.
\]
We can also adopt a dual point of view and define a cylinder measure as a compatible family of positive linear functionals $\set{\rho_P:\Cb{X_P}\rightarrow\R}$---or, in other words, a positive linear functional on the injective limit $\Cb{X;\mc P}$. When the measure $\mu$ is clear from the context, we will sometimes write 
\[
\langle f\rangle = \int_X f(x)\mu(\D x) = \rho_P(f_P) = \int_{X_P}f_P(x_P)\mu_P(\D x_P),\quad f\in\Cb{X;\mc P},
\]
where $P\in\mc P$ and $f_P\in\Cb{X_P}$ are such that $f=f_P\circ P$.
Adopting a physical terminology, we will sometimes refer to a single measure $\mu_P$ on $X_P$ as an \emph{effective theory.}
\end{defn}

\subsection{Conditional expectations}

Given a measure $\mu$ on $X$, $f\in L^\infty(X,\mu)$ and $\varphi:X\rightarrow Y$, we will use the notation
\(
\EE{\mu}{f}{\varphi}
\)
for the conditional expectation of $f$ with respect to $\mu$ given $\varphi$---that is, the unique 
element $\EE{\mu}{f}{\varphi}\in L^\infty(Y,\varphi_*\mu)$ such that
\[
\int_X f(x)g\bigl(\varphi(x)\bigr)\mu(\D x) = \int_Y\EE{\mu}{f}{\varphi}(y)g(y)\varphi_*\mu(\D y),\quad g\in L^1(Y,\varphi_*\mu).
\]
Otherwise said, $\varphi_*(f\mu) = \EE{\mu}{f}{\varphi}\varphi_*\mu$, which enables one to see that $\EE{\mu}{f}{\varphi}$ is the Radon-Nikodým derivative of $\varphi_*(f\mu)$ with respect to $\varphi_*\mu$. Now, suppose that $\mu = \set{\mu_P|P\in\mc P}$ is a cylinder measure and $f,\varphi$ are cylinder functions. We define $\EE{\mu}{f}{\varphi}$ to be $\EE{\mu_P}{f_P}{\varphi_P}$, where $P$ is any projection in $\mc P$ such that $f$ and $\varphi$ factor through $X\rightarrow X_P$ via $f_P:X_P\rightarrow\R$ and $\varphi_P:X_P\rightarrow Y$, respectively. The result is easily seen to be independent of $P$, using the following properties of conditional expectations:
\begin{enumerate}
	\item $\EE{\mu}{(\varphi^*f)g}{\varphi} = f\EE{\mu}{g}{\varphi}$.
	\item $\EE{\mu}{f}{\varphi\circ\psi} = \EE{\psi_*\mu}{\EE{\mu}{f}{\psi}}{\varphi}$.
\end{enumerate}
Finally, given a symmetry $\varphi$ of $\mu$, i.e.\ a compatible family of invertible maps $\varphi_P:X_P\rightarrow X_P$ such that 
\(
(\varphi^{-1})_*\mu = \Set{(\varphi^{-1}_P)_*\mu_P} = \mu,
\)
one has the covariance property
\[
\EE{}{f\circ\varphi}{P} = \EE{}{f}{P}\circ\varphi.
\]

\subsection{The free real Boson field}

We recall now how the free measure of a real Boson field on the torus $\T^d$ looks like, in momentum coordinates.
Given $x\in L^2(\T^d)$, we have the Fourier series expansion
\[
x(s) = \sum_{k\in\Z^d}\hat x_k\e^{\I ks},\quad \hat x_k = 
\frac{1}{(2\pi)^d}\int_{\T^d}\e^{-\I ks}x(s)\D s.
\]
For a real field one has $\hat x_k^* = \hat x_{-k} $\loo. Write $[k] = \set{k,-k}$ for the equivalence class of $k$ in $\Z^d/\pm 1$ and
\[
a_{[k]}(x) = \frac{\hat x_k+\hat x_{-k}}{\sqrt 2},\quad b_{[k]}(x) = \frac{\hat x_k-\hat x_{-k}}{\sqrt 2\I},
\]
so that
\(
\norm x^2 = \sum_{k\in\Z^d} \abs{\hat x_k}^2 = \sum_{[k]\in\Z^d/\pm 1}\bigl(a_{[k]}^2+b_{[k]}^2\bigr)
\)
and $\D a_{[k]}\D b_{[k]} = -\I\D \hat x_k^*\D\hat x_k$\loo.
One has that
\[
\Set{x\in L^2(\T^d) | a_{[\ell]}(x) = b_{[\ell]}(x) = 0 \text{ whenever } [\ell]\neq[k] }
\]
is an eigenspace of $-\Delta$ with eigenvalue $\abs k^2 = \sum k_i^2$\LO. Thus, the free measure reads, formally,
\begin{align*} 
\mu(\D x) &=
\prod_{[k]\in\Z/\pm 1} \frac{\abs k^2+m^2}{2\pi}\e^{-\frac12(\abs k^2+m^2)\left(a_{[k]}^2+b_{[k]}^2\right)}\D a_{[k]}\D b_{[k]} \\
&= \prod_{[k]\in\Z^d/\pm 1} \frac{\abs k^2+m^2}{2\pi\I}\e^{-\frac12(\abs k^2+m^2)\hat x_k^*\hat x_k}\D\hat x_k^*\D\hat x_k\loo,
\end{align*} 
where $m$ is the mass parameter.
This is straightforwardly made sense of as a cylinder measure, and we do so using the system of projections
\[
P_Ax(s) = \sum_{k\in A}\hat x_k\e^{\I ks},\quad A\subseteq\Z^d \text{ with } \abs A<\infty \text{ and } A=-A.
\]
The measure $\mu_A$ on $P_AX=X_A$ is the product measure
\[
\mu_A(\D x_A) = \prod_{[k]\in A/\pm 1} \frac{\abs k^2+m^2}{2\pi\I}\e^{-\frac12(\abs k^2+m^2)\hat x_k^*\hat x_k}\D\hat x_k^*\D\hat x_k\loo,\quad x_A=P_Ax.
\]
It will be useful to have the special names $P_n$ and $P'_n$ for the projections $P_{B_n}$ and $P_{B_n'}$\LO, respectively, where
\[
B_n' = \Set{k\in\Z^d | \max\, \abs{k_i}= n},\quad
B_n = \bigcup_{m=0}^n B_m'\Lo,
\]
and we will use the following notation:
\begin{align*}
X_n &= P_nX& x_n &= P_nx& \mu_n &= \mu_{B_n}\\
X_n' &= P_n'X& x'_n &= P'_nx& \mu'_n &= \mu_{S_n}
\end{align*}

\section{Cylinder measure perturbation and renormalization theory}

\subsection{Cylinder measure perturbations}

\begin{defn}
A \emph{perturbation} of a cylinder measure $\mu=\set{\mu_P|P\in\mc P}$ is a family of measurable functions $L=\set{L_P}$, $L_P=L_P(x_P)$, such that $\Set{ L_P(x_P)\mu_P(\D x_P) }$ is a (possibly signed) cylinder measure. 
\end{defn}

Observe that the compatibility condition for the family $\set{L_P\mu_P}$ reads
\begin{equation} \label{renormalization}
L_P = \EE{\mu}{L_Q}{P},\quad Q\succcurlyeq P.
\end{equation}
Thus, a cylinder measure perturbation is a martingale $\set{L_P}$ with respect to the family of random variables $x_P = P(x)$. This suggests that it is potentially helpful to consider $\set{x_P}$ as a sort of stochastic process: one in which, instead of time passing, one has \emph{resolution increasing} (both being forms of information increase). When adopting that point of view we will talk of $\mu$ as a \emph{resolution process.} We note, in passing, that the free measure is a resolution process with independent increments.

Given a perturbation $L$ of a cylinder measure $\mu$, 
one has that the family $\Set{\e^{-\lambda L_P}\mu_P}$ is close (up to corrections of order $\lambda^2$) to being a compatible family of measures. Thus, establishing which perturbation of the free measure should correspond to a classical interaction Lagrangian amounts to solving the renormalization problem to order one. 
Also, if one is given an \emph{aggregate} effective interaction Lagrangian $\tilde L_Q$ at scale $Q$, then equation~\eqref{renormalization} provides a first-order approximation to the corresponding aggregate effective Lagrangian at scale $P$. Equation~\eqref{renormalization} can be seen, therefore, as a linearization of the renormalization step of passing from scale $Q$ to scale $P$.

\subsection{Local perturbations and Wick ordering}

Now, we are interested in \emph{local} perturbations, i.e.\ those arising asymptotically from integration of a density:
\[
L_n(x_n) = \int_{\T^d} \mc L_n\bigl(x_n(s)\bigr)\D s+o(1) \text{ as } n\rightarrow\infty.
\]
Observe that if $L_{n+1}$ is \emph{exactly} local, in the sense that
\[
L_{n+1}(x_{n+1}) = \int_{\T^d} \mc L_{n+1}\bigl(x_{n+1}(s)\bigr)\D s,
\]
then there is a priori no reason to expect $L_n(x_n) = \EE{}{L_{n+1}}{x_n}$ to be exactly local; hence the importance of defining locality via an asymptotic condition. There exist, however, perturbations which are exactly local at all scales.
In order for a family $\set{\mc L_n}$ of Lagrangian densities to provide such a perturbation it suffices that 
\[ 
\EE{\mu}{\mc L_m\bigl(x_m(s)\bigr)}{x_n} = \mc L_n\bigl(x_n(s)\bigr).
\] 
This helps understanding the importance of Wick ordering, given, for measures more general than the Gaussian, by the Appell polynomials. Take, for simplicity, the $\phi^4$ theory. Let $\mu$ be the free measure and define the quantities
\[
\sigma_p(m) = \Langle x_m(s)^p\Rangle,
\]
which are independent of $s$ because  $\mu$ is $\T^d$-invariant. We consider the polynomials
\begin{equation} \label{Wick 4}
\mc L_n(x) = x^4 - 6\sigma_2(n) x^2 + 6\sigma_2(n)^2-\sigma_4(n).
\end{equation}
Since $\mu$ has independent increments, the family $\Set{\mc L_n\bigl(x_n(s)\bigr)}$ has the martingale property 
with respect to $\set{x_n}$---see~\cite{anshelevich2003appell}, for instance. Of course, this applies as well to polynomial interactions other than the quartic. Now, since  commutation relations with field operators are not messed up by Wick ordering (see~\cite{baez1992introduction}, for instance), the effective Lagrangians $\set{\mc L_n}$ above have the best claim to be the perturbation of the free measure corresponding to the $\phi^4$ theory.

We are in position to understand that for a scalar boson field with polynomial self-interaction, $L(\mu)$ should be a perturbation of $\mu$ with density somehow given by the Appell version of the interaction---the problem being that for processes with \emph{dependent} increments the martingale property fails. 
In the next section, we show that a suitable generalization of the Appell polynomials  exists under an asymptotic freedom condition on $\mu$,
but before getting into it let us explicitely examine one renormalization step---that is, write down the relationship between $\mc L_n$ and $\mc L_{n-1}$\lo. Take $\mc L_n$'s of the form $\sum_{p=0}^4 \alpha_p(n)x^p$, with $\alpha_4(n)=1$. One has that $\EE{}{\mc L_n}{x_n}$ is the sum of the following terms:
\begin{align}
&x_{n-1}(s)^4 \label{quartic}\\
&\EE{}{ \alpha_3(n)+x_n'(s) }{x_{n-1}}x_{n-1}(s)^3 \label{cubic}\\
&\EE{}{ \alpha_2(n) + 3\alpha_3(n)x_n'(s) +  6x_n'(s)^2}{x_{n-1}} x_{n-1}(s)^2 \label{quadratic}\\
&\EE{}{ \alpha_1(n) + 2\alpha_2(n)x_{n}'(s) + 3\alpha_3(n)x_{n}'(s)^2 + 4x_{n}'(s)^3 }{x_{n-1}}x_{n-1}(s) \label{linear}\\
&\mathbb E\left[\alpha_0(n) + \alpha_1(n)x_{n}'(s) + \alpha_2(n) x_{n}'(s)^2 + \alpha_3(n)x_{n}'(s)^3 + x_{n}'(s)^4 \middle| x_{n-1} \right] \label{constant}
\end{align}
Observe the diagonal structure of this equations.
In the case of the free measure, \eqref{cubic} and~\eqref{linear} can be set to 0 for all $n$, but we have kept their general form here with an eye on the future.
With this expression at hand we emphasize two facts: 
\begin{enumerate}
	\item The Appell polynomials are by no means the only family of quartic polynomials providing a local perturbation of the free measure: given arbitrary $\alpha_p$'s at a scale $n_0$\lo, one can solve inductively for $\alpha_p(n)$. 
	\item If the resolution process $\mu$ has dependent increments, 
then $\alpha_p(n-1)$ might not be constant, even if $\alpha_p(n)$ is; it will generally be a non-local, non-homogeneous function $\alpha_p(n-1;x_{n-1};s)$. Thus, we must be prepared to accept that $\alpha_p(n) = \alpha_p(n;x_n;s)$. We note, however, that as the $s$-de\-pen\-dence comes from a conditional expectation, we can assume that the $\alpha_p$'s are $\T^d$-covariant, thereby ensuring that the resulting effective Lagrangians are $\T^d$-invariant. 
\end{enumerate}

\subsection{Perturbation of asymptotically free theories}

Suppose that $\mu=\set{\mu_n}$ is a resolution process with eventually independent increments, by which we mean that there exists some $n_0$ such that
\[
\langle f_ng_{n,m}\rangle = \langle f_n\rangle \langle g_{n,m}\rangle 
\]
for all  $m>n\geq n_0$\lo, $f_n\in C(X_n)$ and $g_{n,m}\in C(X_{n,m})$. Then, the densities 
\[
\tilde{\mc L}_n\bigl(x_n(s)\bigr) = x_n(s)^4 - 6\sigma_2(n)x_n(s)^2 + 6\sigma_2(n)^2-\sigma_4(n)
\]
are compatible for $n\geq n_0$\lo,
and one can define for $n<n_0$
\[
\mc L_n(x_n;s) = \EE{}{\tilde{\mc L}_{n_0}\bigl(x_{n_0}(s)\bigr)}{x_n}.
\]
Thus, one might expect that for cylinder measures satisfying a suitable weakened form of the eventually independent increment condition, the limit
\begin{equation} \label{failed renormalized lagrangian}
L_n(x_n) = \lim_{m\rightarrow\infty} \EE{}{\int_{\T^d}\tilde{\mc L}_m\bigl(x_m(s)\bigr)\D s}{x_n}
\end{equation}
should exist, but this is just not the case: an explicit calculation shows the presence of a divergent term as soon as $\int_{\T^d} x_{n,m}(s)^2\D s$ is not eventually independent of $x_n$\lo. We emphasize, however, that this does not rule out the existence of a compatible family $\mc L_n(x_n;s)$ such that
\begin{equation} \label{phi^4 condition}
\lim_{n\rightarrow\infty} \Bigl\{ \mc L_n(x_n;s) - \tilde{\mc L}_n\bigl(x_n(s)\bigr) \Bigr\} = 0.
\end{equation}
The problem upon trying to find it is that equations~\eqref{cubic}-\eqref{constant} allow, given  $\alpha_p(n)$, for a determination of  $\alpha_p(n+1)$ only up to terms with $\EE{}{\cdot}{x_n} = 0$. We produce next an explicit choice formally making condition~\eqref{phi^4 condition} hold, leaving the problem of uniqueness for further investigations. Our solution involves several limits clearly existing under conditions which will be referred to as \emph{asymptotic freedom} in an admittedly abusive way, because we do not  want to pick and stick to a precise definition yet. For the time being, having the theory of weakly dependent processes in mind (see~\cite{doukhan1994mixing}, for instance), we just note that the right notion will be related to the existence of a double sequence $\set{C_{nm}}$, satisfying some decay conditions, such that
\[
\bigl(\forall f_n\in C(X_n'),\ g_m\in C(X_m')\bigr)\quad \bigl\lvert \langle f_ng_m\rangle - \langle f_n\rangle\langle g_m\rangle \bigr\rvert \leq C_{nm} \norm{f_n}
\norm{g_m}
\]
in some suitable norm. As our calculations below suggest, for the $\phi^4$ field correlation decay conditions such as $\sum_{n,m} C_{nm}\sigma_4'(m) <\infty$ seem to be good candidates.

\begin{thm}
Let $\mu$ be a $\T^d$-invariant, even, 
asymptotically free process. We assume that our  manipulations can be justified if the correlation decay is fast enough. Define $\alpha_4(n)=1$ and
\begin{align*}
\alpha_3(n) &= 4\sum_{i=n+1}^\infty \EE{}{ x_i'(s) }{x_n} \\
\alpha_2(n) &= 3\sum_{i=n+1}^\infty\EE{}{\alpha_3(i)x_i'(s)}{x_n} + \lim_{m\rightarrow\infty}\biggl\{ 6\sum_{i=n+1}^m \EE{}{ x_i'(s)^2 }{x_n} - 6\sigma_2(m) \biggr\} \\
\alpha_1(n) &= 2\sum_{i=n+1}^\infty \EE{}{\alpha_2(i)x_i'(s)}{x_n} + 3\sum_{i=n+1}^\infty \EE{}{\alpha_3(i)x_i'(s)^2}{x_n} + 4\sum_{i=n+1}^\infty \EE{}{x_i'(s)^4}{x_n} \\
\alpha_0(n) &= \sum_{i=n+1}^\infty \EE{}{\alpha_1(i)x_i'(s)}{x_n} + \sum_{i=n+1}^\infty \EE{}{\alpha_3(i)x_i'(s)^3}{x_n} \\ &\quad + \lim_{m\rightarrow\infty}\biggl\{ \sum_{i=n+1}^m\EE{}{\alpha_2(i)x_i'(s)^2}{x_n} + \sum_{i=n+1}^m\EE{}{x_i'(s)^4}{x_n} + 6\sigma_2(m)^2 - \sigma_4(m) \biggr\} 
\end{align*}
Then, $\mc L_n(x_n;s) = \sum_{p=0}^4 \alpha_p(n;x_n;s)x_n(s)^p$ gives a perturbation of $\mu$ such that, formally,
\[
\lim_{m\rightarrow\infty}\Bigl\{ \mc L_m(x_m;s) - \tilde{\mc L}_m\bigl(x_m(s)\bigr) \Bigr\} = 0.
\]
\end{thm}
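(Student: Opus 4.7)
The plan is to split the argument into two tasks. First, verify that $\{\mc L_n(x_n;s)\}$ satisfies the martingale recursion displayed in~\eqref{quartic}--\eqref{constant}, so that $L_n(x_n) = \int_{\T^d}\mc L_n(x_n;s)\,\D s$ defines a legitimate perturbation of $\mu$. Second, establish the asymptotic matching with $\tilde{\mc L}_m$ as $m\to\infty$.

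For the first task, I would exploit the telescoping structure built into the definitions: each $\alpha_p(n)$ is (up to the $n$-independent counterterms in $\alpha_2$ and $\alpha_0$) an infinite sum indexed from $i=n+1$, so that shifting $n\mapsto n-1$ just adds the $i=n$ summand. Writing out the recursion for $\alpha_3$, for instance, one has
\[
\alpha_3(n-1) - \EE{}{\alpha_3(n)}{x_{n-1}} = 4\EE{}{x_n'(s)}{x_{n-1}},
\]
which is exactly the non-$\alpha$ part of~\eqref{cubic}; the tower property $\EE{}{\EE{}{\cdot}{x_n}}{x_{n-1}}=\EE{}{\cdot}{x_{n-1}}$ handles the reindexing of the surviving tail. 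The analogous check at $p=2,1,0$ reproduces \eqref{quadratic}--\eqref{constant} term by term. The counterterms $6\sigma_2(m)$ and $6\sigma_2(m)^2-\sigma_4(m)$ drop out of the recursion because they are independent of $n$; they are present only to render the bracketed limits convergent.

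For the second task, I would evaluate each $\alpha_p(m)$ in the limit $m\to\infty$. Under asymptotic freedom, the tail conditional expectations $\EE{}{x_i'(s)^q}{x_m}$ differ from the deterministic moments $\langle x_i'(s)^q\rangle$ by an error controlled by $C_{m,i}$, which vanishes suitably in $m$. This immediately gives $\alpha_3(m),\alpha_1(m)\to 0$. For $\alpha_2(m)$, the first sum vanishes by the same argument, while the bracketed limit is asymptotically $6\sum_{i=m+1}^N\langle x_i'(s)^2\rangle - 6\sigma_2(N) = -6\sigma_2(m)$, using that for asymptotically orthogonal increments $\sigma_2(N)-\sigma_2(m)$ coincides with the tail sum. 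An analogous computation recovers $\alpha_0(m)\to 6\sigma_2(m)^2-\sigma_4(m)$, whence $\mc L_m-\tilde{\mc L}_m\to 0$.

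The main obstacle is the one the statement itself flags by the word ``formally'' and by deferring any precise notion of asymptotic freedom. To turn the telescoping bookkeeping above into a rigorous theorem one must (i) fix a quantitative mixing-type hypothesis---some version of $\sum_{n,m}C_{nm}\sigma_4'(m)<\infty$---strong enough that the infinite sums defining each $\alpha_p(n)$ converge in a $\T^d$-covariant function space, plausibly $L^2(\mu_n)$; (ii) justify the interchanges of conditional expectation, limit and sum used above, which requires uniform estimates on the $\alpha_p$'s across scales; and (iii) control the replacement of $\EE{}{x_i'(s)^q}{x_m}$ by $\langle x_i'(s)^q\rangle$ accurately enough that the subtractive renormalizations in $\alpha_2$ and $\alpha_0$ still balance rather than merely cancel formally. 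Once a workable asymptotic freedom hypothesis is fixed, the diagonal structure of~\eqref{quartic}--\eqref{constant} makes the remaining induction on $p$ essentially mechanical.
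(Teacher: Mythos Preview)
Your plan coincides with the paper's proof: telescoping plus the tower property for compatibility, then a term-by-term asymptotic analysis of the $\alpha_p$'s. Two places deserve more care than you give them. First, the identity $\sigma_2(N)-\sigma_2(m)=\sum_{i=m+1}^N\sigma_2'(i)$ that you need for $\alpha_2$ is not a consequence of asymptotic orthogonality of increments but an exact identity coming from $\T^d$-invariance (cross terms $\langle x_i'(s)x_j'(s)\rangle$ for $i\neq j$ vanish after averaging over $s$, since distinct Fourier shells are $L^2$-orthogonal). Second, the $\alpha_0$ case is \emph{not} merely ``analogous'': the paper spends most of its effort there, and the key extra ingredient is the expansion of $\sigma_4(m)$ obtained by writing $x_m(s)=x_n(s)+\sum_{i=n+1}^m x_i'(s)$, expanding the fourth power, and using the evenness of $\mu$ to kill the odd cross-moments. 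Only after that expansion do the pieces of the bracketed limit in $\alpha_0(n)$ recombine into $6\sigma_2(n)^2-\sigma_4(n)$ plus asymptotically vanishing correlation terms. Your sketch is otherwise on target.
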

\begin{proof}
Let us check compatibility. We have that
\begin{align*}
&\EE{}{\alpha_3(n+1)+x_{n+1}'(s)}{x_n} \\ &\qquad = \sum_{i=n+2}^\infty\EE{}{x_i'(s)}{x_n} + \EE{}{x_{n+1}'(s)}{x_n} \\ &\qquad= \alpha_3(n), \\
&\EE{}{\alpha_2(n+1) + 3\alpha_3(n+1)x_{n+1}'(s) + 6x_{n+1}'(s)^2}{x_n} \\
&\qquad = 3\sum_{i=n+2}^\infty\EE{}{\alpha_3(i)x_i'(s)}{x_n} + 3\EE{}{\alpha_3(n+1)x_{n+1}'(s)}{x_n} \\ &\qquad\qquad+ \lim_{m\rightarrow\infty}\biggl\{ 6\sum_{i=n+2}^m\EE{}{x_i'(s)^2}{x_n} - 6\sigma_2(m) + 6\EE{}{x_{n+1}'(s)^2}{x_n} \biggr\}  \\
&\qquad = \alpha_2(n),
\end{align*}
and similar calculations hold for $\alpha_1$ and $\alpha_0$. 

Before proceeding we introduce some convenient notation. We will write
\[
\sigma_p(n,m) = \Langle x_{n,m}(s)^p\Rangle,\quad \sigma_p'(n) = \Langle x_n'(s)^p\Rangle.
\]
Observe that, thanks to the $\T^d$-invariance of $\mu$,
\[ 
 \Langle \sum_{i=n+1}^m\sum_{j=n+1}^m x_i'(s)x_j'(s) \Rangle 
	= \sum_{i=n+1}^m\Langle x_i'(s)^2\Rangle + \sum_{i\neq j}\Langle\strokedint_{\T^d} x_i(s)x_j(s)\D s\Rangle,
\]
and therefore $\sigma_2(n,m) = \sum_{i=n+1}^m \sigma_2'(i)$.

Now, it is clear that both $\alpha_3(n)$ and $\alpha_1(n)$ vanish as $n\rightarrow\infty$. The same happens to the first term of $\alpha_2(n)$ and the first and second terms of $\alpha_4(n)$. Also,
\[ 
\lim_{m\rightarrow\infty}\sum_{i=n+1}^m\biggl\{ \EE{}{x_i'(s)^2}{x_n} - \sigma_2(m) \biggr\} = -\sigma_2(n) +  \sum_{i=n+1}^\infty \EE{}{x_i'(s)^2-\sigma_2'(i)}{x_n},
\] 
showing that $\lim_{n\rightarrow\infty}\bigl\{ \alpha_2(n) + 6\sigma_2(n) \bigr\} = 0$. As for the third term in $\alpha_0(n)$, note first that $\sum_{i=n+1}^m\EE{}{\alpha_2(i)x_i'(s)^2}{x_n}$ has the following subterms:
\begin{align*}
&3\sum_{i=n+1}^m\sum_{j=i+1}^\infty \EE{}{\alpha_3(j)x_j'(s)x_i'(s)^2}{x_n} \\
&6\sum_{i=n+1}^m\sum_{j=i+1}^\infty \EE{}{\left(x_j'(s)^2-\sigma_2'(j)\right)x_i'(s)^2}{x_n} \\
&-6\sum_{i=n+1}^m \sigma_2(i)\EE{}{x_i'(s)^2}{x_n} 
\end{align*}
It will be convenient to rewrite the third one as
\[
-6\sum_{i=n+1}^m \sigma_2(n)\EE{}{x_i'(s)^2}{x_n} - 6\sum_{i=n+1}^m\sum_{j=n+1}^i \sigma_2'(j)\EE{}{x_i'(s)^2}{x_n}.
\]
On the other hand,
letting $u_n=x_n(s)$ and $u_i=x_i'(s)$ for $i=n+1,\dots,m$,
by parity one has
\begin{align*}
\sigma_4(m) &= 
\Langle \Bigl(\sum_i u_i\Bigr)^4 \Rangle = \sum_{i,j,k,l} \Langle u_iu_ju_ku_l\Rangle 
	= \Bigl\langle \sum_i u_i^4\Bigr\rangle + 3\Bigl\langle \sum_{i\neq j} u_i^2u_j^2 \Bigr\rangle \\
	&= \sigma_4(n) + \sum_{i=n+1}^m\sigma_4'(i) + 6\sum_{i=n+1}^m\Langle x_n(s)^2x_i'(s)^2\Rangle + 6\sum_{i=n+1}^m\sum_{j=i+1}^m \Langle x_i'(s)^2x_j'(s)^2\Rangle.
\end{align*}
Thus,
\( 
\sum_{i=n+1}^m\EE{}{\alpha_2(i)x_i'(s)^2}{x_n} + \sum_{i=n+1}^m\EE{}{x_i'(s)^4}{x_n} + 6\sigma_2(m)^2 - \sigma_4(m) 
\) 
can be written as the sum of the following terms:
\begin{align*}
&6\sigma_2(n)^2 - \sigma_4(n) \\
&\sum_{i=n+1}^m\Bigl(\EE{}{x_i'(s)^4}{x_n} - \sigma_4'(i)\Bigr) \\
&6\sum_{i=n+1}^m \Bigl( 2\sigma_2(n)\sigma_2'(i) - \sigma_2(n)\EE{}{x_i'(s)^2}{x_n} - \Langle x_n(s)^2x_i'(s)^2\Rangle \Bigr) \\
&6\sigma_2(n,m)^2 - 6\sum_{i=n+1}^m\sum_{j=n+1}^i \sigma_2'(j)\EE{}{x_i'(s)^2}{x_n} - 6\sum_{i=n+1}^m\sum_{j=i+1}^m \Langle x_i'(s)^2x_j'(s)^2\Rangle  \\
&3\sum_{i=n+1}^m\sum_{j=i+1}^\infty \EE{}{\alpha_3(j)x_j'(s)x_i(s)^2}{x_n}  \\
&6\sum_{i=n+1}^m\sum_{j=i+1}^\infty \EE{}{\left(x_j'(s)^2-\sigma_2'(j)\right)x_i'(s)^2}{x_n} 
\end{align*}
This shows, assuming that a fast correlation decay will suffice to make this expressions converge as $m\rightarrow\infty$, that 
\[
\lim_{n\rightarrow\infty}\bigl\{\alpha_0(n)-6\sigma_2(n)^2+\sigma_4(n)\bigr\}=0. \qedhere
\]
\end{proof}


\section{Closing remarks}

We now assess our work by comparing it with the standard approaches to the construction of quantum fields, and in doing so we assume that our program can be brought to completion. Along the way we will also
comment on some of its implicances.

We start by considering the rigorous construction of the $\phi^4$ field in $d=2$ pioneered by Glimm and Jaffe~\cite{glimm1968lambda}, see also~\cite{simon2015p}. Regarding the ultraviolet problem, which is the one addressed here, their method is based on making sense of
\begin{equation} \label{wick ordered field}
\lim_{n\rightarrow\infty} \mc L_n\bigl(x_n(s)\bigr)
\end{equation}
as either an operator acting on Fock space or a well-defined function of the Gaussian field. Succeeding in this enables one to construct either the Hamiltonian or the Euclidean (bona-fide) measure of the theory without much trouble. Our approach proceeds more slowly, by first constructing a cylinder measure. From this point of view no limit needs to be taken, because the densities $\set{\mc L_n}$ already define a perturbation of the free cylinder measure. However, this perturbation has to be extended to more general cylinder measures, and a related limit, namely
\(
\lim_{m\rightarrow\infty} \EE{}{\mc L_m\bigl(x_m(s)\bigr)}{x_n},
\)
might seem like a good candidate---except for the fact that it does not exist. This suggests that trying to construct renormalized powers of the field as in~\eqref{wick ordered field} might not be suitable as a general renormalization strategy. We will come back to this later. 

When the same approach is used to construct the $\phi^4$ field in $d=3$~\cite{glimm1968boson3}, see also~\cite{rivasseau1991from}, some difficult problems are encountered. To start with, the limit~\eqref{wick ordered field} does not exist if $\mc L_n$ is the fourth-degree Wick polynomial: one has to add further conterterms which depend not only on the cutoff scale, but also on the coupling constant $\lambda$. This fact raises two philosophical questions that we are able to answer now:
\begin{enumerate}
	\item
Why all of a sudden the counterterms become non-linear in $\lambda$? Well, for us this comes as no surprise, because the dependence of the perturbation on the measure forces the aggregate effective Lagrangians to be non-linear functions of the coupling constant. 
	\item 
Why Wick ordering ceases to play the central role it did in $d=2$? Well, in fact this is not the case: it ceases to play an obvious role because higher order terms in the aggregate effective Lagrangian start affecting the construction of renormalized powers of the field, but the non-aggregate effective Lagrangian should still be built upon Wick ordering. This also helps clarifying and complementing the introductory remarks in~\cite[Chapter 8]{baez1992introduction}.
\end{enumerate}
All of this brings about a change of perspective regarding couplings: for us, they are \emph{constant,} the only cutoff-dependent object being the (non-ag\-gre\-gate) effective Lagrangian. 
But nowadays couplings are construed as running---a perspective that probably seemed natural in view of the apparent insufficiency of Wick ordering, together with the need for counterterms non-linear in the coupling constants---and with that comes the acceptance of the idea that, in general, there is a multiplicity of equally valid renormalization schemes, leading to different theories that are all entitled to be considered the quantum version of a unique classical Lagrangian. We believe that this is actually a mistake, but before elaborating on this it is pertinent to move on to the perturbative approach to renormalization. 

The traditional starting point in perturbative renormalization theory  is the equation
\begin{equation} \label{traditional approach}
\mu_{\lambda,P}(\D x_P) \approx \e^{-\int\lambda\tilde{\mc L}_P(x_P(s))\D s}\mu_{0,P}(\D x_P)
\end{equation}
which is used to construct an aggregate effective Lagrangian density $\tilde{\mc L}_P$\lo, generally belonging to the largest pertinent class of Lagrangians containing the classical one. This approach has been made fully rigorous~\cite{costello2011renormalization}, but given its perturbative nature one should not expect it to produce a construction of the theory. The result is a suitable family of aggregate effective Lagrangians that depends on the choice of renormalization scheme. Therefore, it becomes important that the relevant undetermined parameters constitute a finite dimensional manifold, leading to a classification of theories roughly into renormalizable (good) and non-renormalizable (bad) ones. Our approach also offers some insight in this regard: one traditionally has, as input data to produce an aggregate effective Lagrangian, just a tangent vector (the perturbation of the free measure given by Wick ordering the interaction) whereas a whole vector field is actually needed. 
Now, as we have seen, equations~\eqref{quartic}-\eqref{constant} allow for the determination of $\alpha_p(n+1)$ in terms of $\alpha_p(n)$ only up to terms with $\EE{}{\cdot}{x_n}=0$, and this is where a choice of renormalization scheme enters the scene. However, our approach enables us to explicitely identify the right choice: it is that for which the resulting non-aggregate effective Lagrangians are asymptotically given by the Wick ordered interaction. We emphasize, moreover, that we are able to formally solve for this renormalized Lagrangian regardless of any perturbative renormalizability criterion.
In particular, we can talk about \emph{the} renormalized $\phi^4$ Lagrangian in arbitrary dimension. 

We now make one last point concerning any approach to the construction of quantum fields. Recall the fact that the formal perturbation $\mc L_n(\mu;x_n;s)$ of an asymptotically free measure that we have found cannot be recovered as $\lim_{m\rightarrow\infty} \EE{}{\tilde{\mc L}_m\bigl(x_m(s)\bigr)}{x_n}$ with $\tilde{\mc L}_n$ the corresponding Appell polynomial. This makes it plausible that a local continuum theory exists without being the limit of any local lattice approximation. In particular, in the case of the $\phi^4$ field, this means that the well-known non-existence theorem in $d>4$~\cite{aizenman1981proof} might not imply the non-existence of the continuum theory. 
Of course, we should not expect the $\phi^4$ field to exist in $d>3$ because there it ceases to be asymptotically free, but at any rate a full non-existence proof is still lacking.

We finish with a remark on potential practical applications. Consider the differential equation~\eqref{measure evolution}, governing the change of the measure with the coupling constant. If one is dealing with an asymptotically free theory for which the relevant energy scale can be estimated a priori, then one can replace it by the approximation
\[
\frac{\D}{\D\lambda}\mu_n = -\left(\int_{\T^d}\mc L_n(\mu_n;s)\D s\right)\mu_n\lo,
\]
which is a non-linear differential equation that one might be able to solve numerically. This could be a practical way to compute the theoretical value of physical quantities in non-perturbative regimes---and, depending on the actual value of $n$, might even be easier to deal with than the Feynman diagram approach in perturbative calculations.

\bibliographystyle{amsplain}
\bibliography{/storage/emulated/0/eratosthenes/mathphys}

\providecommand{\bysame}{\leavevmode\hbox to3em{\hrulefill}\thinspace}
\providecommand{\MR}{\relax\ifhmode\unskip\space\fi MR }
\providecommand{\MRhref}[2]{%
  \href{http://www.ams.org/mathscinet-getitem?mr=#1}{#2}
}
\providecommand{\href}[2]{#2}
\begin{thebibliography}{10}

\bibitem{aizenman1981proof}
M~Aizenman, \emph{Proof of the triviality of $\phi_d^4$ field theory and some
  mean-field features of {I}sing models for $d\geq 4$}, Physical Review Letters
  \textbf{47} (1981), 1--4.

\bibitem{anshelevich2003appell}
M~Anshelevich, \emph{Appell polynomials and their relatives}, arXiv preprint
  math/0311043 (2003).

\bibitem{badrikian1969mesures}
A~Badrikian, \emph{Mesures cylindriques}, S{\'e}minaire d'Analyse Fonctionnelle
  (1969), 1--10.

\bibitem{baez1992introduction}
JC~Baez, IE~Segal, and Z~Zhou, \emph{Introduction to algebraic and constructive
  quantum field theory}, Princeton University Press, 1992.

\bibitem{costello2011renormalization}
K~Costello, \emph{Renormalization and effective field theory}, vol. 170,
  American Mathematical Society, 2011.

\bibitem{doukhan1994mixing}
P~Doukhan, \emph{Mixing}, Springer, 1994.

\bibitem{glimm1968boson3}
J~Glimm, \emph{Boson fields with the :$\phi^4$: interaction in three
  dimensions}, Communications in Mathematical Physics \textbf{10} (1968),
  no.~1, 1--47.

\bibitem{glimm1968lambda}
J~Glimm and A~Jaffe, \emph{A $\lambda(\phi^4)_2$ quantum field theory without
  cutoffs. {I}}, Physical Review \textbf{176} (1968), no.~5, 1945.

\bibitem{rivasseau1991from}
V~Rivasseau, \emph{From perturbative to constructive renormalization}, vol.~38,
  Princeton University Press, 1991.

\bibitem{simon2015p}
B~Simon, \emph{The ${P}(\phi)_2$ {E}uclidean (quantum) field theory}, Princeton
  University Press, 2015.

\end{thebibliography}

\end{document}